\def\BibTeX{{\rm B\kern-.05em{\sc i\kern-.025em b}\kern-.08em
    T\kern-.1667em\lower.7ex\hbox{E}\kern-.125emX}}
\newtheorem{claim}{Claim}
\newtheorem{theorem}{Theorem}
\newcommand{\dl}{\delta}
\newcommand{\msf}{\mathsf}
\newcommand{\mrm}{\mathrm}
\newcommand{\lp}{\left(}
\newcommand{\rp}{\right)}
\newcommand{\lbp}{\left\{}
\newcommand{\rbp}{\right\}}
\begin{document}

\title{Distortion-Based Outer-Bounds for Channels\\ with Rate-Limited Feedback}

\author{Alireza~Vahid,~\IEEEmembership{Senior~Member,~IEEE}
\thanks{A.~Vahid is with the Department of Electrical Engineering, University of Colorado Denver, USA. Email: {\sffamily alireza.vahid@ucdenver.edu}.} 
}

\maketitle

\begin{abstract}
We present a new technique to obtain outer-bounds on the capacity region of networks with ultra low-rate feedback. We establish a connection between the achievable rates in the forward channel and the minimum distortion that can be attained over the feedback channel. 
\end{abstract}

\begin{IEEEkeywords}
Rate-limited feedback, erasure channels, channel capacity, Shannon feedback, sub-bit feedback.
\end{IEEEkeywords}


\section{Introduction}
\label{Section:Introduction_RLBC}

The introduction of massive Machine-Type Communications (mMTC) challenges many assumptions took for granted in network information theory. One of the main challenges is the increased cost of learning. For instance, in current systems, it is well justified to assume free access to small control packets (\emph{e.g.}, ACK/NACK signals) when needed, as they are much smaller is size compared to payload packets. However, this is no longer the case in mMTC where payload and control packets will be of comparable sizes. To make matters more complicated, there is a growing concern about security attacks that aim to disrupt unprotected feedback channels.

We present new outer-bounds for networks with low-rate feedback. The outer-bound establishes a connection between rate-distortion theory and achievable rates in multi-terminal networks with feedback. Interestingly, we learn that the best use of the feedback channel may not be to minimize the error but rather the distortion in reconstructing channel information. More specifically, we first quantify how closely the channel state may be reconstructed using the rate-limited feedback link, and then, we define a space of indistinguishable channel realizations in which the outer-bound is optimized.

To illustrate the technique, we focus on the two-user broadcast erasure channel (BEC) and assume a sub-bit feedback link from \emph{only} one receiver, $\msf{Rx}_F$, while the other receiver, $\msf{Rx}_N$, does not share any information with the other nodes. Receiver $\msf{Rx}_f$ uses the rate-limited feedback link to causally provide its potentially encoded channel state information (CSI) to the other nodes. Although it remains open whether the bound is tight, it is the first of its kind and the surprising message is that to achieve the bound, the approach may be to purposefully induce distortion into the encoded CSI. We outline how existing results fall short of achieving these bounds and provide further insights and interpretations. 


\noindent {\bf Related Literature:} 
In~\cite{ardestanizadeh2009wiretap}, outer-bounds for wiretap channels with rate-limited output feedback were derived, which are tight for physically degraded channels. 
Shayevitz and Wigger observed in~\cite{shayevitz2012capacity} that finding a general feedback capacity formula for memoryless broadcast channels (BCs) is very hard. For other channels such as Gaussian
BCs, the capacity with single-user feedback is still unknown~\cite{amor2015mimo}. The block Markov coding of~\cite{wu2016coding} provides interesting insights but involves characterizing complicated auxiliary random variables.
To further understand how low-rate feedback affects the capacity region of multi-terminal channels, two-user broadcast erasure channels with intermittent~\cite{vahid2019capacity,vahid2020erasure} and one-sided~\cite{he2017two,lin2018gaussian,lin2019no} delayed feedback have been studied. Interestingly, it was shown in~\cite{lin2019no,JournalsingleuserCSI} that even when only one receiver provides its delayed CSI to the transmitter, the outer-bound with global delayed feedback can be achieved. This latter finding motivates us to further lower the feedback rate to sub-bit territory to understand the fundamental limits of communications with rate-limited feedback in its purest form.
In the context of interference channels, in~\cite{AlirezaISIT,vahid2012interference}, the capacity of two-user interference channels with rate-limited feedback was established; new coding schemes with noisy or intermittent output feedback were proposed in~\cite{gastpar2013coding,karakus2015gaussian}; and~\cite{tuninetti2012outer,cheng2013two,suh2018two} generalize such ideas to two-way communications. Finally, locality of feedback was studied under different delay assumptions in~\cite{vahid2010capacity,aggarwal2011achieving,vahid2015impact,vahid2016two,vahid2016does,vahid2017interference,vahid2019throughput}.




\section{Problem Formulation}
\label{Section:Problem_RLBC}

We consider the two-user broadcast erasure channel (BEC) of Fig.~\ref{Fig:BC_CEE} in which a single-antenna transmitter, $\msf{Tx}$, wishes to communicate two independent messages, $W_F$ and $W_N$, to two single-antenna receiving terminals $\msf{Rx}_F$ and $\msf{Rx}_N$ (read feedback/silent receiver), respectively, over $n$ channel uses. Each of the messages, $W_F$ and $W_N$, is uniformly distributed over $\lbp1,2,\ldots,2^{nR_F}\rbp$ and $\lbp1,2,\ldots,2^{nR_N}\rbp$, respectively. At time instant $t$, the messages are mapped to channel input $X[t] \in \mathbb{F}_2$ (in the binary field), and the respective received signals at $\msf{Rx}_F$ and $\msf{Rx}_N$ are:
\begin{align}
\label{eq_DL_channel}
Y_F[t] = S_F[t] X[t]~~ \; \mbox{and} \;~~ Y_N[t] = S_N[t] X[t], 
\end{align}
where $\lbp S_F[t]\rbp$ is the Bernoulli $(1-\delta_F)$ process that governs the erasure at $\mathsf{Rx}_F$, and $\lbp S_N[t]\rbp$ is the Bernoulli $(1-\delta_N)$ process that governs the erasure at $\mathsf{Rx}_N$. In this manuscript, we assume the channels are distributed independently over time and across users, and we limit the scope to $\delta_F = \delta_N = \delta$ (\emph{i.e.} homogeneous channels) where $\delta$ is known globally. 

We assume that at time instant $t$, each receiver knows its channel value instantly, \emph{e.g.}, at time instant $t$, $\msf{Rx}_F$ knows the realization of $S_F[t]$. When the channel realization is $1$, the corresponding terminal receives $X[t]$ noiselessly, and when it is $0$, the terminal understands an erasure has occurred. 

\begin{figure}[!ht]
\centering
\includegraphics[height = .5\columnwidth]{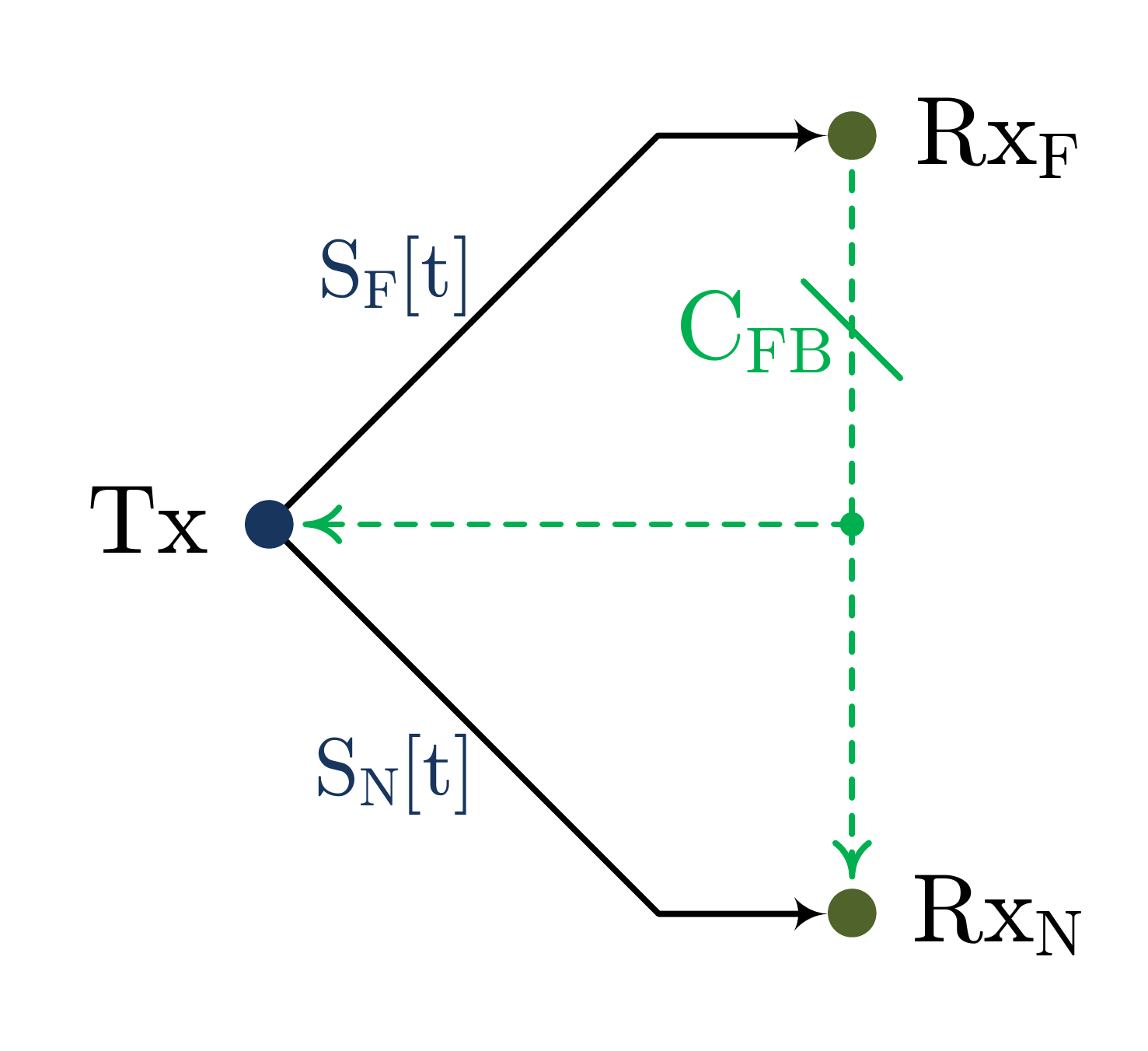}
\caption{Two-user BEC with rate-limited one-sided feedback.\label{Fig:BC_CEE}}
\end{figure}

We further assume a one-sided feedback structure in which at time instant $t$, $\mathsf{Rx}_N$ knows $S_N[t]$ but does \emph{not} share this information with the other nodes. On the other hand, we assume $\mathsf{Rx}_F$ shares its channel state with receiver $\mathsf{Rx}_N$ and the transmitter through a rate-limited feedback channel of capacity $C_{\sf FB}$.
We assume the feedback encoder at $\mathsf{Rx}_F$ sends back feedback symbol $K[t] \in \mathcal{K}[t]$ at time $t$ and this information becomes available to the other nodes at time $t+1$. Here, $\mathcal{K}[t]$ is the feedback alphabet at time $t$. The feedback symbol depends causally on $S_F[t]$, and the cardinality of the feedback alphabets satisfies
\begin{align}
\frac{1}{t}\sum_{\ell = 1}^{t}{\log_2\lp \left| \mathcal{K}[\ell] \right| \rp} \leq C_{\sf FB}, \qquad \forall~1 \leq t \leq n.
\end{align}

The constraint imposed at the encoding function $f_t(.) $ at time index $t$ is
\begin{align} 
\label{eq_enc_function}
X[t] = f_t\lp W_F, W_N, K^{t-1} \rp.
\end{align}

Receivers $\msf{Rx}_F$ and $\msf{Rx}_N$ use decoding functions $\varphi_{F,n}\left( Y_F^n, S_F^n \right)$ and $\varphi_{N,n}\left( Y_2^n, K^n, S_N^n \right)$ to get estimates $\widehat{W}_F$ of $W_F$ and $\widehat{W}_N$ of $W_N$, respectively. An error occurs whenever the estimate does not match the corresponding message. The average probabilities of error are given by 
\begin{align}
\lambda_{F,n} = \mathbb{E}[P(\widehat{W}_F \neq W_F)],~\lambda_{N,n} = \mathbb{E}[P(\widehat{W}_N \neq W_N)], 
\end{align}
where the expectations are taken with respect to the random choice of the transmitted messages.

We say that a rate pair $(R_F,R_N)$ is achievable if there exists a block encoder at the transmitter, and a block decoder at each receiver, such that the average probabilities of error go to zero as the block length $n$ goes to infinity. The capacity region, $\mathcal{C}$, is the closure of the set of achievable rate pairs. 



\section{Main Results}
\label{Section:Main_RLBC}

In this section, we provide the new distortion-based outer-bound for the two-user BEC with rate-limited feedback. Define $0 \leq D^\ast \leq \min\lbp \dl, 1-\dl \rbp$ to be the unique value to satisfy
\begin{align}
\label{Eq:MinDistortion}
H\lp D \rp = \lp H\lp \delta \rp -  C_{\sf FB} \rp^+,
\end{align}
and $\gamma_{\mrm{out}}$ to be
\begin{align}
\label{Def:Gamma}
\gamma_{\mrm{out}} \overset{\triangle}= \frac{D^\ast}{\min \lbp \dl, 1-\dl \rbp}.
\end{align}


\begin{theorem}
\label{THM:Out_RLBC}
The capacity region, $\mathcal{C}$, of the two-user BEC with one-sided rate-limited feedback is included in
\begin{equation}
\label{Eq:Cout-RLBC}
\mathcal{C}_{\mrm{out}} \overset{\triangle}=
\left\{ \begin{array}{ll}
\hspace{-1.5mm} \left( R_F, R_N \right) \left| \parbox[c][3em][c]{0.25\textwidth} {
$R_F + \beta_{\mrm{out}} R_N \leq \beta_{\mrm{out}} \left( 1 - \delta \right)$\\
$\beta_{\mrm{out}} R_F + R_N \leq \beta_{\mrm{out}} \left( 1 - \delta \right)$
} \right. \end{array} \right\},
\end{equation}
where
\begin{align}
\label{Eq:Beta_RLBC}
\beta_{\mrm{out}} &= \gamma_{\mrm{out}} + \lp 1 - \gamma_{\mrm{out}} \rp \lp 1 + \dl \rp. 
\end{align}
\end{theorem}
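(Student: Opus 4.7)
The plan is to combine three ingredients: Fano's inequality at each receiver, the converse of the rate-distortion theorem applied to the feedback link, and the classical entropy inequalities underlying the BEC outer bound with feedback. Both inequalities in \eqref{Eq:Cout-RLBC} should follow from the same template, up to an exchange of the roles of $W_F$ and $W_N$ in the genie step; the sketch below focuses on $R_F + \beta_{\mrm{out}} R_N \le \beta_{\mrm{out}}(1-\delta)$.

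I would first apply Fano's inequality to write $nR_F \le I(W_F; Y_F^n, S_F^n) + n\epsilon_n$ and $nR_N \le I(W_N; Y_N^n, S_N^n, K^n) + n\epsilon_n$. Next, because $K^n$ is a function of $S_F^n$ with $H(K^n) \le n C_{\sf FB}$, the converse of the rate-distortion theorem for a $\mathrm{Bernoulli}(1-\delta)$ source under Hamming distortion forces any reconstruction $\hat{S}_F^n = g(K^n)$ to have average per-letter distortion at least $D^*$, equivalently $\tfrac{1}{n}H(S_F^n \mid K^n) \ge H(D^*)$, with $D^*$ as in \eqref{Eq:MinDistortion}. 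Introducing $\hat{S}_F^n$ as a genie-provided reconstruction to $\msf{Rx}_N$ only enlarges the outer region and exposes $D^*$ as the residual uncertainty about the state sequence. Conceptually, slots in which $\hat{S}_F[t] = S_F[t]$ (asymptotic fraction $1-\gamma_{\mrm{out}}$) behave as in the full-feedback regime, while the remaining slots (fraction $\gamma_{\mrm{out}}$) behave as in the no-feedback regime.

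I would then adapt the classical BEC outer bound with single-user feedback (which produces the $(1+\delta)$ coefficient on $R_N$ at full feedback) to this effectively noisy-feedback setting. A Jensen-type mixing of the two regimes yields the convex combination $\gamma_{\mrm{out}} \cdot 1 + (1-\gamma_{\mrm{out}})(1+\delta) = \beta_{\mrm{out}}$, and hence the right-hand side $\beta_{\mrm{out}}(1-\delta)$. The main technical obstacle I anticipate is translating the asymptotic, block-length distortion statement into the per-slot entropy inequality demanded by the BEC converse: the former is an average over $n$, while the latter is single-letterized. I expect to bridge this via a chain-rule expansion of $H(S_F^n \mid K^n)$ combined with the concavity of the binary entropy function, so that $\tfrac{1}{n}H(S_F^n \mid K^n) \ge H(D^*)$ yields exactly the weight $\gamma_{\mrm{out}}$ in the slot-wise mixture and completes the outer bound with the constant $\beta_{\mrm{out}}(1-\delta)$.
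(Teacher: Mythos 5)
Your proposal correctly assembles the outer skeleton of the argument: Fano's inequality, the rate-distortion converse that forces any reconstruction of $S_F^n$ from $K^n$ to have per-letter distortion at least $D^\ast$ from \eqref{Eq:MinDistortion}, and an entropy-ratio step producing a coefficient $\beta$ that interpolates between the no-feedback value $1$ and the full-feedback value $1+\delta$. What is missing is the one idea that actually produces the weight $\gamma_{\mrm{out}} = D^\ast/\min\lbp \dl, 1-\dl\rbp$, and the per-slot mixture you propose in its place does not yield it. The paper's mechanism is not ``distorted slots behave as no-feedback, undistorted slots as full-feedback.'' It is the construction of a \emph{surrogate} state sequence $\tilde{S}_F^n$ that (i) has the same i.i.d.\ $\mathcal{B}(1-\dl)$ law as $S_F^n$, (ii) is mapped by the feedback encoder to the \emph{same} $K^n$, so that the transmitter and $\msf{Rx}_N$ cannot distinguish the two worlds and the capacity region is unchanged (Claims~\ref{Claim:Conditions} and~\ref{Claim:ModifiedBC_RLBC}), and (iii) is \emph{adversarially correlated} with $S_N^n$. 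Because the encoder's knowledge pins $S_F^n$ down only to within Hamming distortion $D_{\sf FB}\ge D^\ast$, the disagreeing positions can be re-aligned against $S_N^n$ so as to increase $\Pr\lp \tilde{S}_F[t]=S_N[t]=0\rp$ by $D_{\sf FB}\max\lbp \dl,1-\dl\rbp$; rewriting the resulting probability $1-\dl^2-D_{\sf FB}\max\lbp \dl,1-\dl\rbp$ as the convex combination $\gamma_{\mrm{out}}\lp 1-\dl\rp + \lp 1-\gamma_{\mrm{out}}\rp\lp 1-\dl^2\rp$ is precisely where the normalization by $\min\lbp\dl,1-\dl\rbp$ enters. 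Nothing in your sketch produces that normalization.

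Two concrete failures of your version. First, the asymptotic fraction of slots with $\hat{S}_F[t]\ne S_F[t]$ is $D_{\sf FB}\ge D^\ast$, not $\gamma_{\mrm{out}}$; a slot-wise mixture with your stated regimes would give $\beta = D^\ast + (1-D^\ast)(1+\dl)$, which exceeds $\beta_{\mrm{out}}$ and therefore yields a strictly weaker bound than \eqref{Eq:Cout-RLBC} --- you cannot recover \eqref{Eq:Beta_RLBC} by this route. Second, the slot-wise decomposition is not operationally meaningful: the transmitter does not know which slots are distorted, in the ``distorted'' slots it still holds partial state information, and in the ``correct'' slots it does not know they are correct, so neither regime's single-letter converse applies cleanly slot by slot. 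The quantity that actually drives the converse (step $(h)$ of Claim~\ref{Claim_beta_RLBC}) is $\Pr\lp \lbp \tilde{S}_F[t]=S_N[t]=0\rbp^{c}\rp$ under the worst-case joint law of $(\tilde{S}_F,S_N)$ consistent with the observed feedback --- a statement about joint erasure statistics, not about which positions were reconstructed correctly. To repair your argument you would need to add the existence of the indistinguishable sequence $\tilde{S}_F^n$ (proved in the paper by a rate-distortion contradiction) and the max-min computation of \eqref{Eq:MaxOff}--\eqref{Eq:MaxMin}.
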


Similar to the findings of~\cite{lin2019no}, although only $\msf{Rx}_F$ provides feedback, the outer-bounds are symmetric. The results could also provide new capacity bounds for erasure interference channels based on their connection to BECs~\cite{vahid2012binary,AlirezaBFICDelayed,AlirezaInfocom2014,vahid2018arq}.

Figure~\ref{Fig:SumRateRLBC} plots the outer-bound on the maximum sum-rate point for $\delta = 0.4$.
When $C_{\sf FB} \geq H\lp \dl \rp$, this outer-bound matches the capacity region of the two-user BEC with global delayed feedback~\cite{gatzianas2013multiuser}. This recovers the results of~\cite{lin2019no,JournalsingleuserCSI} where it is shown that perfect one-sided feedback is as good as global feedback. The interesting distinction between the results in~\cite{lin2019no,JournalsingleuserCSI} and prior results is the fact that delayed CSI is harnessed at every step of the achievability. One can envision a Markov block structure to interleave different blocks and compress the feedback to $H\lp \dl \rp n$ bits and mimic the results of~\cite{lin2019no} to achieve of point ${\sf A}$ in Figure~\ref{Fig:SumRateRLBC} where $H\lp \delta \rp =  C_{\sf FB}$. However, below this limit, the achievability can no longer be derived from such arguments. In fact, the argument presented above requires perfect delayed CSI, and for $C_{\sf FB} < H\lp \dl \rp$, if we attempt to send perfectly a part of CSI back to the transmitter, we deviate from this outer-bound as in  Figure~\ref{Fig:SumRateRLBC}. This suggests the possibility of new achievability ideas that purposefully induce distortion in the encoded CSI.

\begin{figure}[!t]
\centering
\includegraphics[width = \columnwidth]{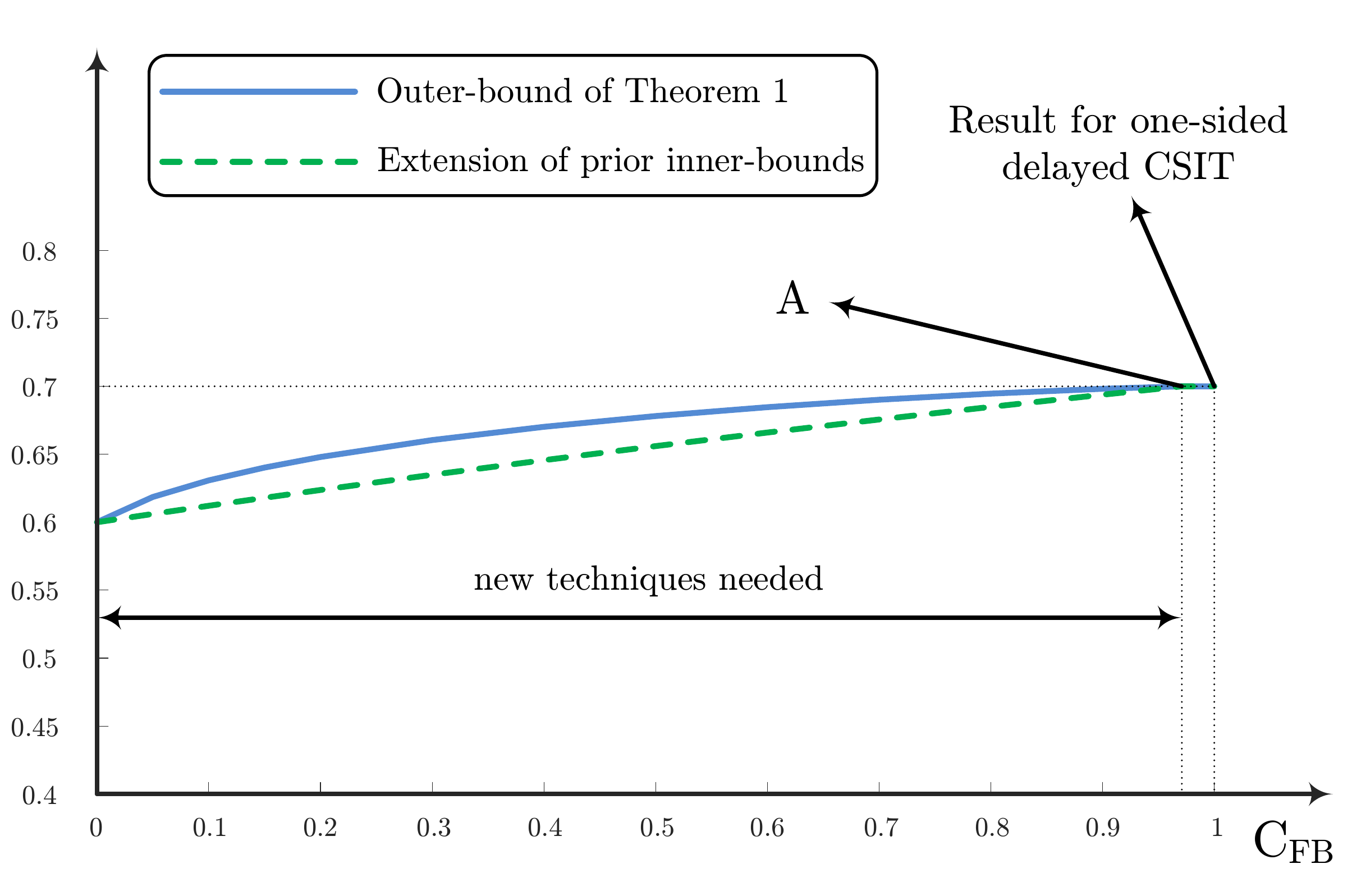}
\caption{Sum-rate outer-bound for $\delta = 0.4$. Point $A$ is where $H\lp \delta \rp =  C_{\sf FB}$.\label{Fig:SumRateRLBC}}
\vspace{-3mm}
\end{figure}

\section{Proof of Theorem~\ref{THM:Out_RLBC}}
\label{Section:Converse_RLBC}

\noindent {\bf Overview}: The proof is based on the following three key ingredients: (1) for $\msf{Rx}_F$, we find a set of channels realizations that result in the same feedback sequence as the original channel; (2) within this set, we find a candidate that has the same marginal distribution as $S_F^n$ but is maximally correlated with $S_N^n$; (3) as further discussed later, we use the fact that: 
\begin{align}
H(X^n|S_F^n, K^n) = H(X^n|K^n).
\end{align}

\noindent {\bf Derivation}: To derive the outer-bounds, we enhance the knowledge at $\msf{Rx}_N$ by providing it with $S_F^{t-1}$ rather than $K^{t-1}$. The transmitter is still informed causally of the CSI associated with $\msf{Rx}_F$ through a rate-limited feedback link of capacity $C_{\sf FB}$. The proof is broken into several steps as discussed below.

\noindent {\bf Step~1}: From Rate-Distortion Theory, we know that for a binary source distributed as i.i.d. $\mathcal{B}\lp 1-\dl \rp$, given a channel capacity of $C_{\sf FB}$, the minimum attainable Hamming distortion, $0 \leq D^\ast \leq \min\lbp \dl, 1-\dl \rbp$, between the source $S_F^n$ and the estimate $\hat{S}_F^n$, is the solution to~\eqref{Eq:MinDistortion}.


Fix the feedback strategy mapping $S_F^n$ to $K^n$, and suppose this strategy results in distortion, $D_{\sf FB} \geq D^\ast$, between $S_F^n$ and the estimate $\hat{S}_F^n$. We further denote the fraction of time instants in which $S_F[t] = 1$ and $\hat{S}_F[t] = 0$ by $D_{10}$. Similarly,  $D_{01}$ is the fraction of time instants in which $S_F[t] = 0$ and $\hat{S}_F[t] = 1$. Thus, we have
\begin{align}
D_{10} + D_{01} =  D_{\sf FB}.
\end{align}

\begin{claim}
\label{Claim:Conditions}
There exists $\tilde{S}_F^n$ such that: $(1)$ we have
\begin{align}
\label{Eq:TildeConditions}
\tilde{S}_F[t] \overset{i.i.d.}\sim \mathcal{B}\lp 1-\dl \rp \text{~and~} \mathbb{E}\lp d\lp \hat{S}_F^n, \tilde{S}_F^n \rp \rp  \leq D_{\sf FB},
\end{align}
where $d\lp \cdot, \cdot \rp$ is the Hamming distance; $(2)$ the feedback encoder maps $\tilde{S}_F^n$ to the same feedback sequence $K^n$ for $S_F^n$. 
\end{claim}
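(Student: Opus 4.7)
The plan is to construct $\tilde{S}_F^n$ by exploiting the preimage structure of the (deterministic) feedback encoder. First, I would fix the encoder as a causal map $e : S_F^n \mapsto K^n$ and, for each feedback string $k^n$ in its image, define $\mathcal{T}_{k^n} := \lbp s_F^n : e(s_F^n) = k^n \rbp$. Any random variable whose realization lies in $\mathcal{T}_{K^n}$ satisfies condition~(2) of the claim automatically, so the task reduces to selecting, within this preimage, a candidate whose marginal is i.i.d.\ $\mathcal{B}\lp 1-\dl \rp$ and whose expected Hamming distance from $\hat{S}_F^n$ is at most $D_{\sf FB}$.

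The simplest witness is $\tilde{S}_F^n := S_F^n$ itself, coupled identically with $\lp K^n, \hat{S}_F^n \rp$: it lies in $\mathcal{T}_{K^n}$ by definition, has the required i.i.d.\ marginal by the channel model, and satisfies the distortion inequality with equality since $\mathbb{E}\lp d\lp \hat{S}_F^n, S_F^n \rp \rp = D_{\sf FB}$. A slightly more flexible option---useful in the next step, where $\tilde{S}_F^n$ must also carry a prescribed joint behavior with $S_N^n$---is to draw $\tilde{S}_F^n \mid K^n = k^n$ on a fresh probability space from $P_{S_F^n \mid K^n}\lp \cdot \mid k^n \rp$; then $\lp \tilde{S}_F^n, K^n, \hat{S}_F^n \rp$ has the same joint law as $\lp S_F^n, K^n, \hat{S}_F^n \rp$, and both conditions of the claim hold verbatim (using that $\hat{S}_F^n$ is a function of $K^n$, so the distortion is preserved under the re-coupling).

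I expect the claim itself to be the easy part; the real obstacle lies downstream, in showing that $\mathcal{T}_{k^n}$ is rich enough to support \emph{further} re-couplings (in particular, the maximal-correlation construction with $S_N^n$ hinted at in step~(2) of the Overview). To handle that, I would invoke the rate-distortion relation~\eqref{Eq:MinDistortion} together with a typical-set count to argue that $\left| \mathcal{T}_{k^n} \right|$ grows roughly like $2^{n\lp H\lp \dl \rp - C_{\sf FB} \rp^+}$, giving enough freedom to redistribute the conditional law on $\mathcal{T}_{k^n}$ without inflating the Hamming distortion past $D_{\sf FB}$.
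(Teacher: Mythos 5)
Your proof is correct, and it takes a genuinely different route from the paper's. The paper argues by contradiction: if no such $\tilde{S}_F^n$ existed, the transmitter could reconstruct $S_F^n$ from $K^n$ with distortion below $D_{\sf FB}$, contradicting rate-distortion theory. You instead exhibit explicit witnesses --- $S_F^n$ itself, or a fresh draw from $P_{S_F^n \mid K^n}$ --- which settles the claim as literally stated more directly and, frankly, more rigorously than the paper's one-liner (whose contradiction is somewhat loose: $D_{\sf FB}$ is the realized distortion of a fixed strategy, not necessarily the rate-distortion optimum, so ``an estimate with distortion smaller than $D_{\sf FB}$'' is not by itself a contradiction unless $D_{\sf FB}=D^\ast$). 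Your closing observation is the important one: the trivial witness $\tilde{S}_F^n=S_F^n$ satisfies the claim but is useless for Step~3, where the paper needs a candidate in the preimage $\mathcal{T}_{K^n}$ that is \emph{maximally correlated} with $S_N^n$ while retaining the i.i.d.\ marginal and distortion at most $D_{\sf FB}$ from $\hat{S}_F^n$; the claim as stated does not deliver that, and the paper's contradiction argument is really gesturing at the richness of the preimage that your typical-set count $\left|\mathcal{T}_{k^n}\right| \approx 2^{n\lp H\lp\dl\rp - C_{\sf FB}\rp^+}$ would make precise. In short, both arguments establish the stated claim; yours is the more careful one for the statement as written, and you correctly identify that the stronger property actually used downstream still needs justification under either approach.
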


\begin{proof}
Suppose no such sequence exists. Then, the transmitter could have an estimate of $S_F^n$ with a distortion smaller than $D_{\sf FB}$, which would be in contradiction with Rate-Distortion Theory, thus, proving the result. 
\end{proof}

As noted, $\tilde{S}_F[t] \overset{i.i.d.}\sim \mathcal{B}\lp 1-\dl \rp$. Denote the fraction of time instants in which $\tilde{S}_F[t] = 1$ and $\hat{S}_F[t] = 0$  by $\tilde{D}_{10}$, and  the fraction of time instants in which $\tilde{S}_F[t] = 0$ and $\hat{S}_F[t] = 1$ by $\tilde{D}_{01}$. We have
$\tilde{D}_{10} + \tilde{D}_{01} \leq D_{\sf FB}$.
 
\noindent {\bf Step~2}: In this step, we claim if we replace sequence $S_F^n$ with $\tilde{S}_F^n$, the capacity region remains unchanged.

\begin{claim}
\label{Claim:ModifiedBC_RLBC}
Any achievable $\lp R_F, R_N \rp$ is included in the capacity region of a BEC with $\tilde{S}_F^n$ instead of $S_F^n$ while other parameters are kept the same and vice versa.
\end{claim}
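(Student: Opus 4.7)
The plan is to prove equivalence of the two capacity regions by showing that the joint distribution of all operationally relevant random variables is preserved when $S_F^n$ is swapped for $\tilde{S}_F^n$. By Claim~\ref{Claim:Conditions}, $\tilde{S}_F^n$ is i.i.d.~$\mathcal{B}\lp 1-\dl \rp$, matching the marginal of $S_F^n$, and the deterministic feedback encoder maps $\tilde{S}_F^n$ to the same feedback string $K^n$ as it does $S_F^n$. Since $K^n$ is a deterministic function of the channel state at $\msf{Rx}_F$, the joint distributions of $\lp S_F^n, K^n \rp$ and $\lp \tilde{S}_F^n, K^n \rp$ therefore coincide---both are completely determined by the shared source marginal and the common encoder.

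I would then propagate this equality through the transmitter and the channel. The codeword satisfies $X[t] = f_t\lp W_F, W_N, K^{t-1} \rp$, so $X^n$ depends on the channel state at $\msf{Rx}_F$ only via $K^{t-1}$. Hence the tuple $\lp W_F, W_N, K^n, X^n, S_F^n, Y_F^n \rp$ in the original BEC has the same joint law as $\lp W_F, W_N, K^n, X^n, \tilde{S}_F^n, \tilde{Y}_F^n \rp$ in the modified BEC, where $\tilde{Y}_F[t] := \tilde{S}_F[t] X[t]$. Any decoder $\varphi_F$ at $\msf{Rx}_F$ therefore incurs identical error probability in the two models. The same conclusion holds at $\msf{Rx}_N$, whose observations $\lp Y_N^n, K^n, S_N^n \rp$ involve only $X^n$ and the independent process $S_N^n$; the joint distribution relevant to $\varphi_N$ is unchanged. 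Hence any $\lp R_F, R_N \rp$ achievable in the original BEC is achievable in the modified one, and a symmetric argument---applying the reverse coupling---gives the reverse inclusion, so the capacity regions coincide.

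The main obstacle is the maximal correlation between $\tilde{S}_F^n$ and $S_N^n$ introduced in Step~2 of the overview: once the enhanced decoder at $\msf{Rx}_N$ is granted side information $\tilde{S}_F^{t-1}$ in place of $S_F^{t-1}$, the joint distribution with $S_N^n$ is not the one the original BEC induces. The proof must verify that this modified correlation does not alter operational rates, which reduces to observing that the receiver's conditional law depends on this side information only through its marginal (which is preserved by construction) and through its coupling with the transmit signal (mediated entirely by $K^n$, whose joint with the channel state is also preserved).
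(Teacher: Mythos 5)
Your proposal is correct and takes essentially the same route as the paper: the paper's proof consists of the mutual-information identities $I( W_F; Y_F^n | S_F^n ) = I( W_F; \tilde{Y}_F^n | \tilde{S}_F^n )$ and $I( W_N; Y_N^n | S_F^n, S_N^n ) = I( W_N; Y_N^n | \tilde{S}_F^n, S_N^n, K^n )$, and their justification is precisely the distributional coupling you spell out (identical marginal law of $\tilde{S}_F^n$, identical feedback string $K^n$, and $X^n$ depending on the $F$-state only through $K^{t-1}$). Your closing point---that the altered correlation between $\tilde{S}_F^n$ and $S_N^n$ is harmless because this side information is conditionally independent of $( W_N, Y_N^n )$ given $( K^n, S_N^n )$---is exactly the content of the paper's steps that drop $S_F^n$ and insert $\tilde{S}_F^n$ in the second chain.
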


\noindent \emph{Proof.} $I\lp W_F; Y_F^n | S_F^n \rp  = I\lp W_F; Y_F^n | S_F^n, K^n \rp$
\begin{align}
= I\lp W_F; \tilde{Y}_F^n | \tilde{S}_F^n, K^n \rp = I\lp W_F; \tilde{Y}_F^n | \tilde{S}_F^n \rp,
\end{align}
and $I\lp W_N; Y_N^n | S_F^n, S_N^n \rp = I\lp W_N; Y_N^n | S_F^n, S_N^n , K^n \rp$
\begin{align}
& = I\lp W_N; Y_N^n | S_N^n , K^n \rp = I\lp W_N; Y_N^n | \tilde{S}_F^n, S_N^n , K^n \rp.
\end{align}

\noindent {\bf Step~3}: In the rate-limited broadcast channel of Section~\ref{Section:Problem_RLBC}, $S_F[t]$ and $S_N[t]$ are distributed as independent Bernoulli random variables, and the feedback encoder at $\msf{Rx}_F$ is unaware of $S_N^n$. In this step, we create a ``worst-case'' scenario by creating maximum correlation between $S_N^n$ and $\tilde{S}_F^n$.

\begin{claim}
Under the conditions expressed in Claim~\ref{Claim:Conditions}, we have
\begin{align}
\label{Eq:MaxOff}
&\max_{\tilde{D}_{10} + \tilde{D}_{01} \leq D_{\sf FB}} \Pr \lp \tilde{S}_F[t] = S_N[t] = 0 \rp \nonumber \\
&~=\left\{ \begin{array}{ll}
\vspace{1mm} \dl^2 + D_{\sf FB}\lp 1 - \dl \rp, & \min \lbp \dl, 1-\dl \rbp = \dl, \\ 
\dl^2 + D_{\sf FB}\dl, & \min \lbp \dl, 1-\dl \rbp = 1 - \dl. 
\end{array} \right.
\end{align}
\end{claim}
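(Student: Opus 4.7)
I plan to view the claim as a coupling problem for the binary triple $(\tilde{S}_F, \hat{S}_F, S_N)$. The observation that unlocks the argument is the independence $S_N \perp \hat{S}_F$: since $\hat{S}_F$ is a deterministic function of $K^n$, which depends only on $S_F^n$, and $S_N^n$ is independent of $S_F^n$ by hypothesis, one has $\Pr(\hat{S}_F = h, S_N = 0) = \delta \, \Pr(\hat{S}_F = h)$ for each $h \in \{0,1\}$. Any worst-case recoupling of $S_N^n$ to $\tilde{S}_F^n$ must respect this independence; it can only reshuffle mass within each stratum $\{\hat{S}_F = h\}$.

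Conditional on $\hat{S}_F$, I will then apply the Fr\'echet--Hoeffding upper bound inside each stratum. Because $S_N$ is still $\mathcal{B}(1-\delta)$ on $\{\hat{S}_F = h\}$, this gives
\[
\Pr(\tilde{S}_F = 0, S_N = 0 \mid \hat{S}_F = h) \leq \min\{\delta,\ \Pr(\tilde{S}_F = 0 \mid \hat{S}_F = h)\}.
\]
Substituting the parametric identities $\Pr(\tilde{S}_F = 0, \hat{S}_F = 0) = \delta - \tilde{D}_{01}$, $\Pr(\tilde{S}_F = 0, \hat{S}_F = 1) = \tilde{D}_{01}$, $\Pr(\hat{S}_F = 0) = \delta - \tilde{D}_{01} + \tilde{D}_{10}$, and $\Pr(\hat{S}_F = 1) = 1 - \delta + \tilde{D}_{01} - \tilde{D}_{10}$, and summing over $h$, I obtain the single-letter upper bound
\[
\Pr(\tilde{S}_F = 0, S_N = 0) \leq \min\{\delta - \tilde{D}_{01},\ \delta(\delta - \tilde{D}_{01} + \tilde{D}_{10})\} + \min\{\tilde{D}_{01},\ \delta(1 - \delta + \tilde{D}_{01} - \tilde{D}_{10})\}.
\]

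The final step is to maximize this bound over $\tilde{D}_{10}, \tilde{D}_{01} \geq 0$ with $\tilde{D}_{10} + \tilde{D}_{01} \leq D_{\sf FB}$. The right-hand side is piecewise linear on a triangular feasible set, so its maximum is attained at a vertex, and the bit-flip symmetry $0 \leftrightarrow 1$ (which sends $\delta \mapsto 1-\delta$ and swaps $\tilde{D}_{10} \leftrightarrow \tilde{D}_{01}$) lets me restrict to $\delta \leq 1/2$. In that regime I expect the optimizer to be $(\tilde{D}_{10}, \tilde{D}_{01}) = (0, D_{\sf FB})$; direct substitution then yields $\delta^2 + D_{\sf FB}(1-\delta)$, and the $\delta \geq 1/2$ branch of the claim follows by symmetry. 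Matching achievability is delivered by the monotone coupling inside each stratum, so no separate construction is needed. The main obstacle will be the case analysis for the piecewise-linear objective: in the operating range $D_{\sf FB} \leq \min\{\delta, 1-\delta\}$, I must verify that no inner minimum becomes tight at a non-vertex configuration, which reduces to checking a short list of linear inequalities in $\tilde{D}_{10}$ and $\tilde{D}_{01}$.
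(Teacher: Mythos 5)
Your proposal is correct and arrives at both branches of \eqref{Eq:MaxOff}, but it does so by a genuinely different route than the paper. The paper's proof is constructive: it tracks the four joint states of $\lp S_F[t],S_N[t]\rp$ in Fig.~\ref{Fig:Transition} and exhibits an explicit mass-moving pattern $S_F^n \to \hat S_F^n \to \tilde S_F^n$ (flip ones to zeros where $S_N=0$, then zeros back to ones where $S_N=1$, or the mirror image when $\delta\ge 1/2$) that raises $\Pr\lp\tilde S_F = S_N = 0\rp$ by $D_{10}\lp 1-\delta\rp$ (resp.\ $D_{01}\delta$) and then saturates the distortion budget. That argument identifies a maximizer but essentially asserts, rather than proves, that no other arrangement does better. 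Your stratification by $\hat S_F$ combined with the Fr\'echet--Hoeffding bound supplies exactly that missing converse: since $\hat S_F$ is a function of $K^n$ and hence of $S_F^n$, the product structure $\Pr\lp \hat S_F = h, S_N = 0\rp = \delta\,\Pr\lp\hat S_F = h\rp$ cannot be altered by the choice of $\tilde S_F^n$, so the conditional Fr\'echet bound is the true ceiling in each stratum and the comonotone coupling attains it; this makes the optimality claim rigorous where the paper's greedy description leaves it implicit. One caution: your statement that the objective is ``piecewise linear, so its maximum is attained at a vertex'' is not a valid inference, because each term is a minimum of affine functions, hence concave, and a concave piecewise-linear function can peak in the interior --- indeed on the switching locus $\delta\tilde D_{10} = \lp 1-\delta\rp\lp\delta-\tilde D_{01}\rp$ your bound equals $\delta$, which exceeds the claimed value. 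The case check you flag is therefore essential, and it does go through: for $D_{\sf FB}\le\min\lbp\delta,1-\delta\rbp$ one has $\delta\tilde D_{10}+\lp 1-\delta\rp\tilde D_{01}\le\max\lbp\delta,1-\delta\rbp D_{\sf FB}\le\delta\lp 1-\delta\rp$ on the entire feasible triangle, so both minima select the same branch everywhere, the objective collapses to the single affine function $\delta^2+\lp 1-\delta\rp\tilde D_{01}+\delta\tilde D_{10}$, and the vertex argument becomes legitimate, with the two branches of the claim following from comparing the coefficients $\delta$ and $1-\delta$. You should also record, for achievability, that rearranging $\tilde S_F$ against $S_N$ within each $\hat S_F$-stratum does not disturb either condition of Claim~\ref{Claim:Conditions}, since the marginal of $\tilde S_F$, the distortion to $\hat S_F$, and the induced feedback sequence depend only on the $\lp\tilde S_F,\hat S_F\rp$ joint; the paper leaves this point equally implicit.
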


\begin{figure}[!ht]
\centering
\includegraphics[width = \columnwidth]{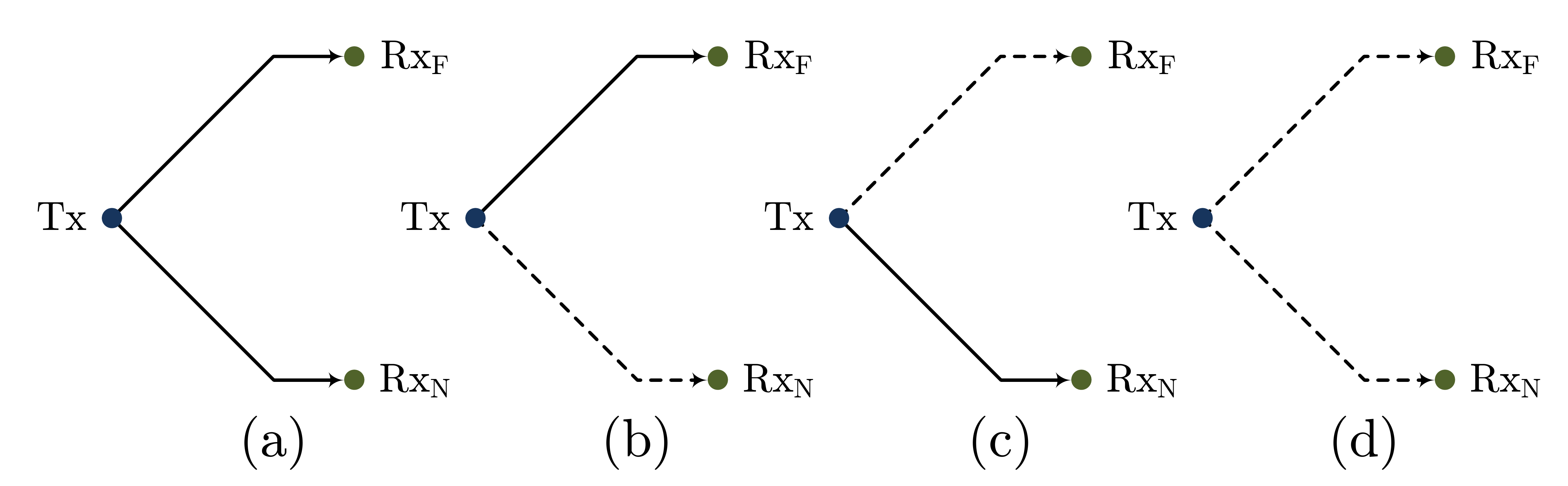}
\caption{Four possible channel realizations at each time.\label{Fig:Transition}}
\end{figure}

\begin{proof} We divide the proof into two parts based on $\dl$.

\noindent $\bullet$ ${\min \lbp \dl, 1-\dl \rbp = \dl:}$ In this case, to maximize $\Pr \lp \tilde{S}_F[t] = S_N[t] = 0 \rp$, the transition from $S_F^n$ to $\hat{S}_F^n$ would be such that to minimize the times in which $\hat{S}_F[t] = 1$ and $S_N[t] = 0$ by setting $D_{10}\lp 1 - \dl \rp n$ ones to zeros when $S_F[t] = 1$ and $S_N[t] = 0$. The transition from $\hat{S}_F^n$ to $\tilde{S}_F^n$ would be such that to maximize the times in which $\tilde{S}_F[t] = 1$ and $S_N[t] = 1$ by changing $D_{10}\lp 1 - \dl \rp n$ zeros in $\hat{S}_F^n$ to ones when $S_F[t] = 0$ and $S_N[t] = 1$. In effect, such transition from $S_F^n$ to $\tilde{S}_F^n$, increases the probability of $\tilde{S}_F[t] = S_N[t] = 0$ by $D_{10}\lp 1 - \dl \rp$, and since $D_{10} \leq D_{\sf FB}$, the maximum occurs at  $D_{10} = D_{\sf FB}$. Based on Figure~\ref{Fig:Transition}, the discussion above can be explained as follows. The transition from $S_F^n$ to $\hat{S}_F^n$ removes realizations from state $(b)$ and adds them to state $(d)$, while transition from $\hat{S}_F^n$ to $\tilde{S}_F^n$ moves realizations from state $d$ to $(c)$. Thus, end-to-end transitions are from state $(b)$ to $(c)$.  

\noindent $\bullet$ ${\min \lbp \dl, 1-\dl \rbp = 1 - \dl:}$ In this case, to maximize $\Pr \lp \tilde{S}_F[t] = S_N[t] = 0 \rp$, the transition from $S_F^n$ to $\hat{S}_F^n$ would be such that to maximize the times in which $S_F[t] = 1$ and $S_N[t] = 1$ by changing $D_{01} \dl n$ zeros to ones when $S_F[t] = 0$ and $S_N[t] = 1$. The transition from $\hat{S}_F^n$ to $\tilde{S}_F^n$ would be such that to minimize the times in which $\tilde{S}_F[t] = 1$ and $S_N[t] = 0$ by changing $D_{01}\dl n$ ones in $\hat{S}_F^n$ to zeros when $S_F[t] = 1$ and $S_N[t] = 0$. In effect, such transition from $S_F^n$ to $\tilde{S}_F^n$, increase the probability of $\tilde{S}_F[t] = S_N[t] = 0$ by $D_{01}\dl$, and since $D_{01} \leq D_{\sf FB}$, the maximum occurs at  $D_{01} = D_{\sf FB}$.
\end{proof}

To prepare for the final step, we define
\begin{align}
\tilde{\gamma}_{\mrm{out}} \overset{\triangle}= \frac{D_{\sf FB}}{\min \lbp \dl, 1-\dl \rbp}.
\end{align}
Based on this definition and \eqref{Eq:MaxOff}, we have
\begin{align}
&\min_{\tilde{D}_{10} + \tilde{D}_{01} \leq D_{\sf FB}} \Pr\left( \{ \tilde{S}_F[t] =  S_N[t] = 0 \}^{c} \right) \nonumber \\
&~= \tilde{\gamma}_{\mrm{out}} \lp 1 - \dl \rp + \lp 1 - \tilde{\gamma}_{\mrm{out}} \rp \lp 1 - \dl^2 \rp.
\end{align}
Further, from rate-distortion theory, we have $D_{\sf FB} \geq D^\ast$. Thus, for {\it any} feedback strategy, we have
\begin{align}
\label{Eq:MaxMin}
&A \overset{\triangle}=\max_{D_{\sf FB} \geq D^\ast} \min_{\tilde{D}_{10} + \tilde{D}_{01} \leq D_{\sf FB}} \Pr\left( \{ \tilde{S}_F[t] =  S_N[t] = 0 \}^{c} \right) \nonumber \\
&~= \gamma_{\mrm{out}} \lp 1 - \dl \rp + \lp 1 - \gamma_{\mrm{out}} \rp \lp 1 - \dl^2 \rp, 
\end{align}
where $\gamma_{\mrm{out}}$ is defined in \eqref{Def:Gamma}, and this bound is attained for a code that achieves the rate-distortion bound. In other words, for {\it any} feedback strategy, we have
\begin{align}
&\Pr\left( \{ \tilde{S}_F[t] =  S_N[t] = 0 \}^{c} \right) \nonumber \\
&~\leq \gamma_{\mrm{out}} \lp 1 - \dl \rp + \lp 1 - \gamma_{\mrm{out}} \rp \lp 1 - \dl^2 \rp
\end{align}
where $\tilde{S}_F^n$ satisfies the conditions in \eqref{Eq:TildeConditions}.

\noindent {\bf Step~4}: For the final step, we first prove the following result.

\begin{claim}
\label{Claim_beta_RLBC}
For the two-user BEC with rate-limited feedback as  described in Section~\ref{Section:Problem_RLBC} and for any input distribution, we have
\begin{align}
H\left( Y_F^n|W_N,S_F^n,S_N^n \right) - \beta_{\mrm{out}} H\left( Y_N^n|W_N,S_F^n,S_N^n \right) \leq 0,
\end{align}
where $\beta_{\mrm{out}}$ is given in \eqref{Eq:Beta_RLBC}.
\end{claim}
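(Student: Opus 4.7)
The plan is to prove this claim by adapting the Maddah-Ali-Tse style entropy inequality for the two-user broadcast erasure channel with delayed CSIT to the enhanced channel whose state pair is the maximally correlated $\lp \tilde{S}_F^n, S_N^n \rp$ of Step~3, in place of the independent $\lp S_F^n, S_N^n \rp$. By Claim~\ref{Claim:ModifiedBC_RLBC}, I may substitute $\tilde{S}_F^n$ for $S_F^n$ inside the conditioning of both entropies at no loss, so the claim reduces to showing $H\lp Y_F^n \mid W_N, \tilde{S}_F^n, S_N^n \rp \leq \beta_{\mrm{out}}\, H\lp Y_N^n \mid W_N, \tilde{S}_F^n, S_N^n \rp$.

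First, I would chain-rule expand both entropies to obtain
\begin{align*}
H\lp Y_F^n \mid W_N, \tilde{S}_F^n, S_N^n \rp &= \sum_{t=1}^n \tilde{S}_F[t]\,H\lp X[t] \mid W_N, \tilde{S}_F^n, S_N^n, Y_F^{t-1} \rp, \\
H\lp Y_N^n \mid W_N, \tilde{S}_F^n, S_N^n \rp &= \sum_{t=1}^n S_N[t]\,H\lp X[t] \mid W_N, \tilde{S}_F^n, S_N^n, Y_N^{t-1} \rp,
\end{align*}
using that $Y_F[t]$ equals $X[t]$ when $\tilde{S}_F[t]=1$ and is deterministically $0$ otherwise (and similarly for $Y_N[t]$). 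The third overview ingredient $H\lp X^n \mid \tilde{S}_F^n, K^n \rp = H\lp X^n \mid K^n \rp$, combined with the fact (Claim~\ref{Claim:Conditions}) that $K^n$ is a common deterministic function of both $S_F^n$ and $\tilde{S}_F^n$, lets me rewrite each per-symbol conditional entropy so that the state dependence enters only through $K^{t-1}$.

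Next, I would decompose
\begin{align*}
H\lp Y_F^n \mid W_N, \tilde{S}_F^n, S_N^n \rp - H\lp Y_N^n \mid W_N, \tilde{S}_F^n, S_N^n \rp &= H\lp Y_F^n \mid Y_N^n, W_N, \tilde{S}_F^n, S_N^n \rp \\
& \quad - H\lp Y_N^n \mid Y_F^n, W_N, \tilde{S}_F^n, S_N^n \rp,
\end{align*}
drop the nonnegative second term, and observe that the remaining conditional entropy is supported only on positions where $\tilde{S}_F[t] = 1$ and $S_N[t] = 0$. By the maximal correlation of Step~3, the expected fraction of such positions is $\lp 1 - \gamma_{\mrm{out}} \rp \dl \lp 1 - \dl \rp$, while the expected fraction of $S_N=1$ positions is $\lp 1 - \dl \rp$. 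Matching per-symbol entropies across these two sets of positions and applying \eqref{Eq:MaxOff} together with the definition \eqref{Eq:Beta_RLBC} of $\beta_{\mrm{out}}$ yields the desired bound.

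The hard part will be the per-symbol entropy matching in this last step: the two sums sit on different pasts ($Y_F^{t-1}$ versus $Y_N^{t-1}$), and the transmitter's encoding reacts to $K^{t-1}$ alone and never to $S_N^{t-1}$. The maximally-correlated coupling of Step~3 is engineered precisely so that the ``only-$F$'' and ``only-$N$'' position counts are both minimized and balanced, and the conditional independence $X^n \perp \tilde{S}_F^n \mid K^n$ makes the per-symbol distribution of $X[t]$ insensitive to the state pattern beyond $K^{t-1}$. Combining these two properties to rigorously close the averaging argument is the technical heart of the proof.
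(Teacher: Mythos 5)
Your route is genuinely different from the paper's, and the arithmetic is consistent: the identity $H(Y_F^n|\Omega)-H(Y_N^n|\Omega)=H(Y_F^n|Y_N^n,\Omega)-H(Y_N^n|Y_F^n,\Omega)$ (with $\Omega=(W_N,\tilde S_F^n,S_N^n)$), dropping the second term, and comparing the fraction $\lp 1-\gamma_{\mrm{out}}\rp\dl\lp 1-\dl\rp$ of $\{\tilde S_F=1,S_N=0\}$ positions against the fraction $\lp 1-\dl\rp$ of $\{S_N=1\}$ positions does reproduce $\beta_{\mrm{out}}-1=\lp 1-\gamma_{\mrm{out}}\rp\dl$. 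The paper instead never forms this difference: it lower-bounds $H(Y_N^n|\Omega)$ directly by $\tfrac{1}{\beta_{\mrm{out}}}H(\tilde Y_F^n,Y_N^n|\Omega)$, the key move being to give both symbols a \emph{common} past $(\tilde Y_F^{t-1},Y_N^{t-1})$ and apply $H(\tilde Y_F[t],Y_N[t]|\text{common past})\le \Pr\lp\{\tilde S_F[t]=S_N[t]=0\}^{c}\rp H(X[t]|\text{common past})$, after which the $Y_N^n$ component of the joint entropy is discarded. That device is precisely what sidesteps the ``different pasts'' problem you run into.

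The gap is that the step you call ``the technical heart''---matching per-symbol entropies across the two position sets---is the entire content of the claim, and it is not closed by anything you wrote; ``conditioning reduces entropy'' alone does not do it, because after the chain rule the term from $H(Y_F^n|Y_N^n,\Omega)$ is an entropy of $X[t]$ averaged over state realizations in $\{\tilde S_F[t]=1,S_N[t]=0\}$ and conditioned on $(Y_F^{t-1},Y_N^n)$, while the term from $H(Y_N^n|\Omega)$ is averaged over $\{S_N[t]=1\}$ and conditioned on $Y_N^{t-1}$ only; these live on different events and different sigma-algebras. The fix is available but must be stated: first drop $Y_F^{t-1}$ and $Y_N[t],\dots,Y_N[n]$ from the conditioning of the numerator term (increasing entropy), and then use that $X[t]=f_t(W_F,W_N,K^{t-1})$ depends on the states only through $K^{t-1}$ and that the engineered pair $(\tilde S_F[t],S_N[t])$ is memoryless, so that conditioning on $\{\tilde S_F[t]=1,S_N[t]=0\}$ versus $\{S_N[t]=1\}$ (and on future states) leaves the same quantity $H\lp X[t]\,|\,Y_N^{t-1},W_N,\tilde S_F^{t-1},S_N^{t-1}\rp$ on both sides; this is the analogue of the paper's steps $(e)$, $(f)$, $(i)$. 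Two smaller issues: your chain-rule expansions with random coefficients $\tilde S_F[t]$ and $S_N[t]$ multiplying entropies are not well-formed (the coefficients should be the probabilities $1-\dl$, or better, $\Pr(\tilde S_F[t]=1,S_N[t]=0)$ and $\Pr(S_N[t]=1)$ after conditioning on the state realization); and note that for the bound to hold for \emph{every} feedback strategy you still need the max--min step over $D_{\sf FB}\ge D^\ast$ in \eqref{Eq:MaxMin}, which your sketch does not mention.
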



\begin{proof}
\begin{align}
&H\left( Y_N^n|W_N,S_F^n,S_N^n \right) \nonumber \\
& = H\left( Y_N^n|W_N,S_F^n,K^n,S_N^n \right) \overset{(a)}= H\left( Y_N^n|W_N,K^n,S_N^n \right) \nonumber \\
& = H\left( Y_N^n|W_N,\tilde{S}_F^n,K^n,S_N^n \right) + I\left( Y_N^n; \tilde{S}_F^n|W_N,K^n,S_N^n \right) \nonumber \\
& \overset{(b)}= H\left( Y_N^n|W_N,\tilde{S}_F^n,K^n,S_N^n \right) \overset{(c)}= H\left( Y_N^n|W_N,\tilde{S}_F^n,S_N^n \right) \nonumber \\
& \overset{(d)}= \sum_{t=1}^{n}{H\left( Y_N[t]|Y_N^{t-1},W_N,\tilde{S}_F^t,S_N^t \right)} \nonumber \\
& \overset{(e)}= \sum_{t=1}^{n}{(1-\dl)H\left( X[t]|Y_N^{t-1},W_N,\tilde{S}_F^t, S_N[t]=1,S_N^{t-1} \right)} \nonumber \\
& \overset{(f)}= \sum_{t=1}^{n}{(1-\delta)H\left( X[t]|Y_N^{t-1},W_N,\tilde{S}_F^{t},S_N^t \right)} \nonumber \\
& \overset{(g)}\ge \sum_{t=1}^{n}{(1-\delta)H\left( X[t]|\tilde{Y}_F^{t-1},Y_N^{t-1},W_N,\tilde{S}_F^{t},S_N^t \right)} \nonumber \\
& \overset{(h)} \ge \sum_{t=1}^{n}\frac{(1-\delta)}{A}H\left( \tilde{Y}_F[t], Y_N[t] |\tilde{Y}_F^{t-1},Y_N^{t-1},W_N,\tilde{S}_F^{t},S_N^t \right) \nonumber 
\end{align}
\begin{align}
& \overset{\eqref{Eq:MaxMin}} \ge \sum_{t=1}^{n}{\frac{(1-\delta)H\left( \tilde{Y}_F[t], Y_N[t] |\tilde{Y}_F^{t-1},Y_N^{t-1},W_N,\tilde{S}_F^{t},S_N^t \right)}{\gamma_{\mrm{out}} \lp 1 - \dl \rp + \lp 1 - \gamma_{\mrm{out}} \rp \lp 1 - \dl^2 \rp}} \nonumber \\
& \overset{(\ref{Eq:Beta_RLBC})}= \sum_{t=1}^{n}{\frac{1}{\beta_{\mrm{out}}}H\left( \tilde{Y}_F[t], Y_N[t] |\tilde{Y}_F^{t-1},Y_N^{t-1},W_N,\tilde{S}_F^{t},S_N^t \right)} \nonumber \\
& \overset{(i)}= \sum_{t=1}^{n}{\frac{1}{\beta_{\mrm{out}}}H\left( \tilde{Y}_F[t], Y_N[t] |\tilde{Y}_F^{t-1},Y_N^{t-1},W_N,\tilde{S}_F^{n},S_N^n \right)} \nonumber \\
&= \frac{H\left( \tilde{Y}_F^n, \tilde{Y}_N^n |W_N,\tilde{S}_F^{n},S_N^n \right)}{\beta_{\mrm{out}}} \overset{(j)}\ge \frac{H\left( \tilde{Y}_F^n|W_N,\tilde{S}_F^{n},S_N^n \right)}{\beta_{\mrm{out}}} \nonumber \\
&\overset{\mathrm{Claim}~\ref{Claim:ModifiedBC_RLBC}}\ge \frac{1}{\beta_{\mrm{out}}}H\left( Y_F^n|W_N,S_F^{n},S_N^n \right), 
\end{align}
where $(a)~\&~(b)$ hold since conditioned on $K^{t-1}$, $X[t]$ is independent of all other channel parameters; $(c)$ follows the fact that $\tilde{S}_F^n$ results in $K^n$; $(d)$ follows from the chain rule and the causality of the channel; $(e)$ holds since $S_N[t]$ is a Bernoulli $\left( 1 - \delta \right)$ process; $(f)$ is true since $X[t]$ is independent of channel realizations at time instant $t$; $(g)$ holds since conditioning reduces entropy; $(h)$ follows by the definition of $A$ in \eqref{Eq:MaxMin}, normalizing by the probability that at least one signal is not erased, and ensuring the inequality holds for any feedback strategy; $(i)$ follows the causality assumption; and $(j)$ holds as the discrete entropy function is non-negative.
\end{proof}

Finally, we are ready to prove the outer-bounds.
\begin{align}
&n \left( R_F + \beta_{\mrm{out}} R_N \right) = H\left(W_F\right) + \beta_{\mrm{out}} H\left(W_N\right) \\
& = H\left( W_F|W_N,S_F^n,S_N^n \right) + \beta_{\mrm{out}} H\left( W_N|S_F^n,S_N^n \right) \nonumber \\
& \overset{\mathrm{Fano}}\leq I\left( W_F; Y_F^n|W_N,S_F^n,S_N^n \right) \nonumber \\
&+ \beta_{\mrm{out}} I\left( W_N; Y_N^n|S_F^n,S_N^n \right) + n \epsilon_n \nonumber \\
& = H\left( Y_F^n|W_N,S_F^n,S_N^n \right) - \underbrace{H\left( Y_F^n|W_F,W_N,S_F^n,S_N^n \right)}_{=~0} \nonumber \\
&+ \beta_{\mrm{out}} H\left( Y_N^n|S_F^n,S_N^n \right) - \beta_{\mrm{out}} H\left( Y_N^n|W_N,S_F^n,S_N^n \right) + n \epsilon_n \nonumber \\
& \overset{\mathrm{Claim}~\ref{Claim_beta_RLBC}}\leq \beta_{\mrm{out}} H\left( Y_N^n|S_F^n,S_N^n \right) + n \epsilon_n \leq n \beta_{\mrm{out}} (1-\delta) + n \epsilon_n, \nonumber 
\end{align}
where $\epsilon_n \rightarrow 0$ as $n \rightarrow \infty$. Dividing both sides by $n$ and taking the limit for $n \rightarrow \infty$ completes the proof. The other outer-bound in Theorem~\ref{THM:Out_RLBC} can be obtained similarly.

\section*{Acknowledgement}

The research of A. Vahid was supported in part by NSF grant ECCS-2030285.

\bibliographystyle{ieeetr}
\bibliography{bib_RLBC.bib}

\end{document}